\newtheorem{thm}{Theorem}
\newtheorem{lemma}[thm]{Lemma}
\theoremstyle{definition}
\def\tr{\rm{tr}}
\def\B#1#2{{#1\choose #2}}
\title{Characteristic Length and Clustering}
\author{Oliver Knill}
\date{October 12, 2014}
\address{
        Department of Mathematics \\
        Harvard University \\
        Cambridge, MA, 02138
        }
\subjclass{Primary:  05C50,81Q10 }
\subjclass{05C35,   
           52Cxx }   
\keywords{Characteristic length, Clustering, Euler characteristic, Variational problems in graph theory}
\begin{document}
\begin{abstract}
We explore relations between various variational problems for graphs: among the functionals 
considered are Euler characteristic $\chi(G)$, characteristic length $\mu(G)$, mean clustering $\nu(G)$, 
inductive dimension $\iota(G)$, edge density $\epsilon(G)$, scale measure $\sigma(G)$,
Hilbert action $\eta(G)$ and spectral complexity $\xi(G)$. A new insight in this note
is that the local cluster coefficient $C(x)$ in a finite simple graph can be written 
as a relative characteristic length $L(x)$ of the unit sphere $S(x)$ within the unit ball 
$B(x)$ of a vertex. This relation $L(x) = 2-C(x)$ will allow to study clustering 
in more general metric spaces like Riemannian manifolds or fractals. 
If $\eta$ is the average of scalar curvature $s(x)$, a formula $\mu \sim 1+\log(\epsilon)/\log(\eta)$ 
of Newman, Watts and Strogatz \cite{NewmanStrogatzWatts} relates $\mu$ with the edge 
density $\epsilon$ and average scalar curvature $
\eta$ telling that large curvature correlates with small characteristic length.
Experiments show that the statistical relation $\mu \sim \log(1/\nu)$ holds for random or 
deterministic constructed networks, indicating that small clustering is often associated to large 
characteristic lengths and $\lambda=\mu/\log(\nu)$ can converge in some graph limits of networks.
Mean clustering $\nu$, edge density $\epsilon$ and curvature average $\eta$ therefore can relate 
with characteristic length $\mu$ on a statistical level.
We also discovered experimentally that inductive dimension $\iota$ and cluster-length ratio 
$\lambda$ correlate strongly on Erd\"os-Renyi probability spaces. 
\end{abstract} 
\maketitle

\section{Introduction}

The interplay between global and local properties often appears in 
geometry. As an example, the Euler characteristic can by Gauss-Bonnet be written 
as an average of local curvature and by Poincar\'e-Hopf as an average of local index density $i_f$
for Morse functions $f$. Given a globally defined quantity on a geometric space, one
can ask to which extent the functional can be described as an average over local properties.
Also of interest is the relation between the various functionals. 
We will look at some examples on the category of finite simple graphs and comment on both 
problems. We look then primarily at characteristic length $\mu$, which is the expectation of 
local mean distance $\mu(x)$ on a metric space equipped with a probability measure. 
On finite simple graphs one has a natural geodesic distance and a natural counting measure so that
networks allow geometric experimentation on small geometries. Most functionals are 
interesting also for Riemannian manifolds, where finding explicit formulas for the characteristic length
can already lead to challenging integrals. Characteristic length is sometimes also defined
as the statistical median of $\mu(x)$ \cite{SmallWorld}. Since the difference is not essential, we use the averages
$$ \mu(G) = \frac{1}{n^2-n} \sum_{x \neq y \in V^2} d(x,y) \; , 
   \nu(G) = \frac{1}{n} \sum_{x \in V} \frac{2e(x)}{n(x)(n(x)-1)}  \; , $$
where $n(x),e(x)$ are the number of vertices and edges in the sphere $S(x)$ of the vertex $x$. 
$\mu$ is the average of the non-local quantity $D(x)=\frac{1}{n-1} \sum_{y \neq x} d(x,y)$ and 
$\nu$ is the average of the local quantity $C(x) = |e(x)|/B(n(x),2)$ giving the fraction of
connections in the unit sphere in comparison to all possible pairs in the unit sphere. \\

This averaging convention for $\mu$ is common. \cite{DoyleGraver1} showed already that
$\mu(G) \leq {\rm diam}(G)-1/2$. An other notion is the variance $v(G) = \max_x d(x) - \min_x d(x)$
where $d(x)=\sum_y d(x,y)$ for which Ore has shown that on graphs with $n$ vertices
has the maximum taken on trees. \cite{EntringerJacksonSnyder}. 
The definition of $\nu(G)$ is to average the edge density of the sphere relative to the case 
when the sphere is the complete graph with $n$ vertices in which case the number 
of edges is $n(x)(n(x)-1)/2$. Both quantities are natural functionals. The
mean cluster coefficient $\nu$ is by definition an average of local quantities. 
While it is impossible to find a local quantity whose average captures
characteristic length exactly, there are notions which come close. We look at three such 
relations. The first is a formula of Newman, Watts and Strogatz \cite{NewmanStrogatzWatts} 
which writes $\mu$ as a diffraction coefficient divided by scalar curvature.
This is intuitive already for spheres, where the signal speed and the curvature determines
the characteristic length. Empirically, we find an other quantity which also often allows to estimate
characteristic length well: it is the mean cluster density $\nu$, the average of a 
local cluster density $C(x)$ as defined by Watts and Strogatz. Thirdly, we will report on 
some experiments which correlate the length-cluster coefficient $\lambda = \mu/\log(1/\nu)$ 
with the inductive dimension $\iota$ of the graph. \\

\begin{figure}{
\scalebox{0.2}{\includegraphics{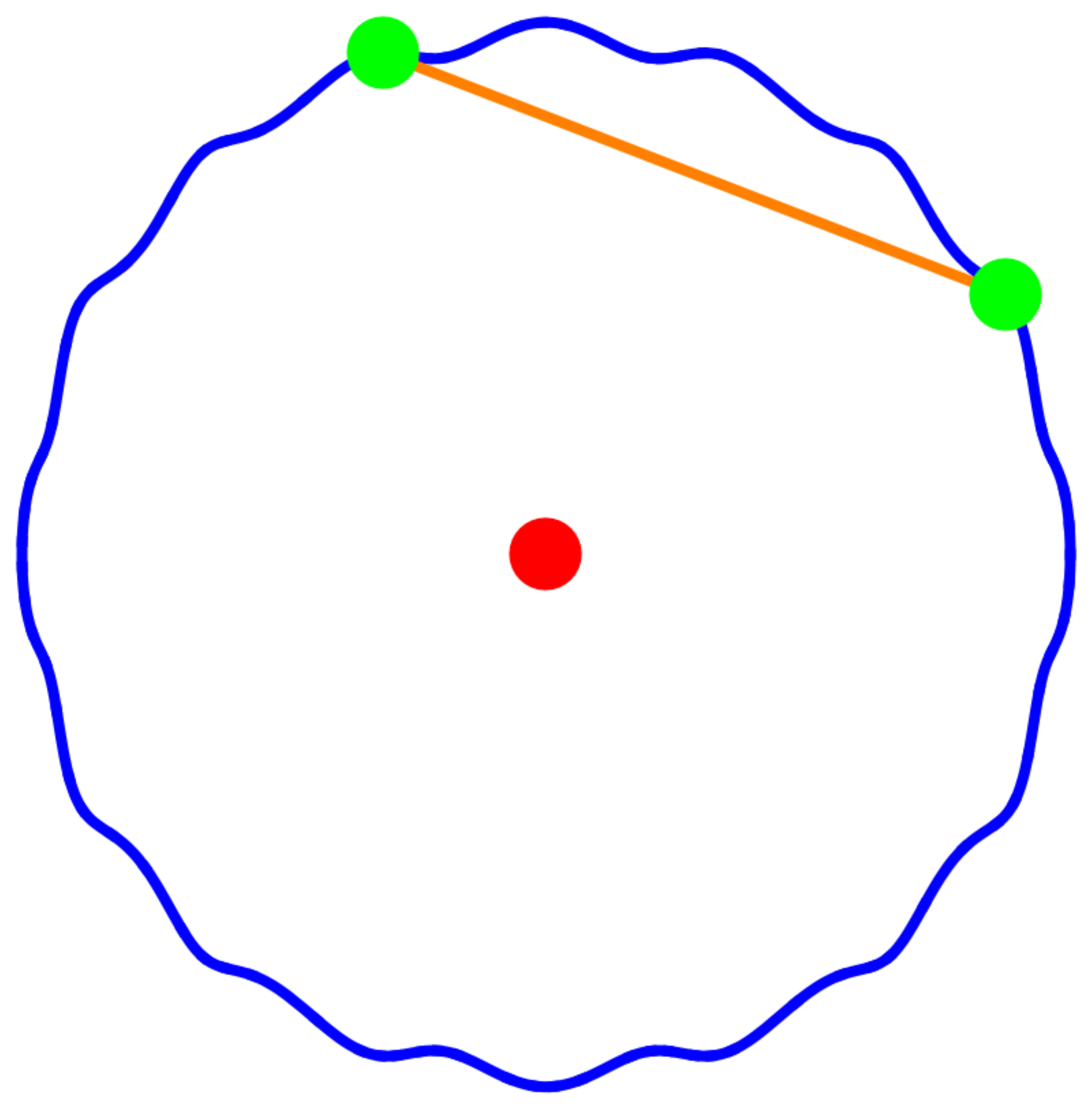}}
\scalebox{0.2}{\includegraphics{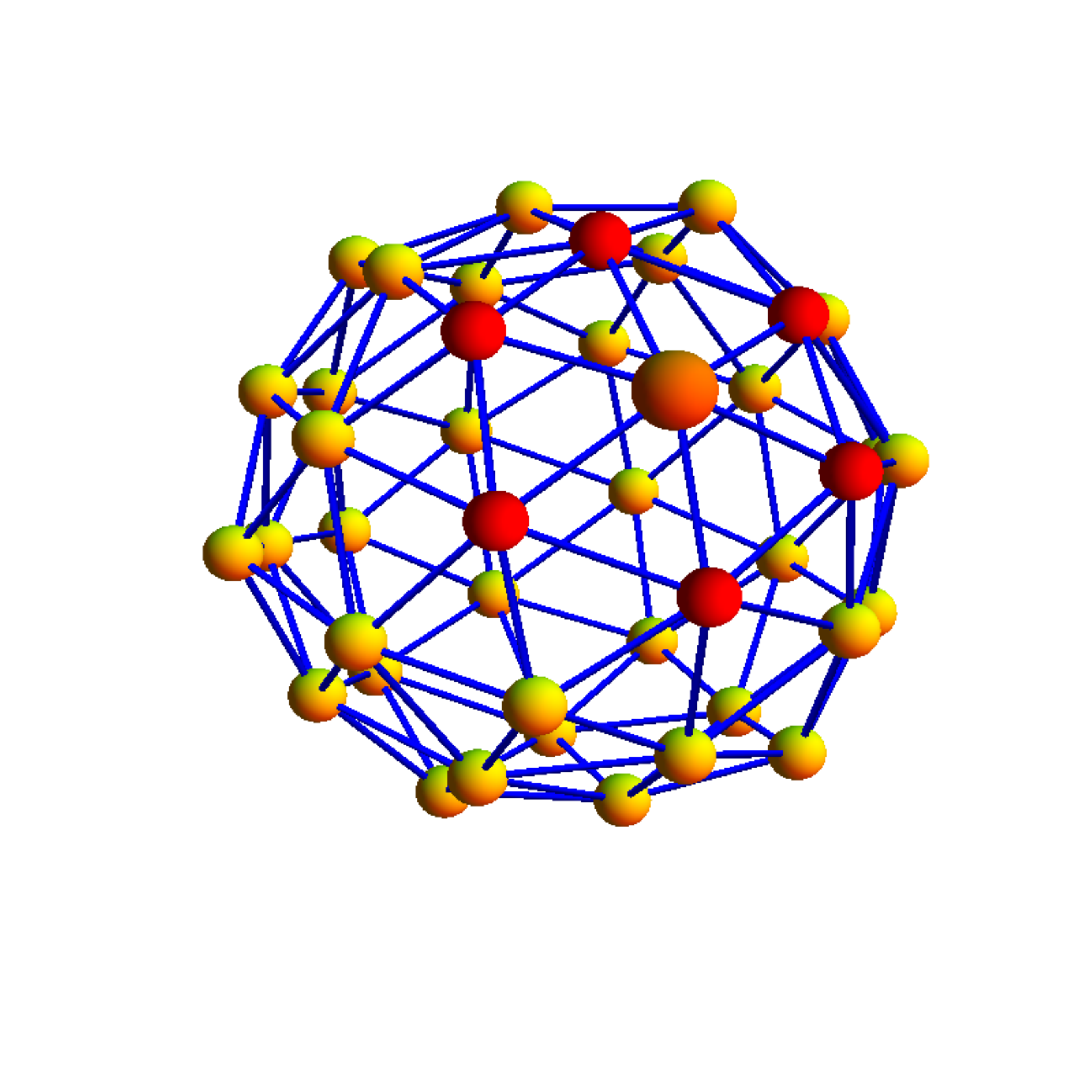}}
\caption{
For graphs, the clustering coefficient $C(x)$ at a vertex $x$ is related to the relative 
characteristic length $L(x)$ of the unit sphere $S(x)$ within the unit ball $B(x)$. 
In other words, $L(x)$ is the average distance between two points in $S(x)$ within $B(x)$. 
The relation $C(x) = 2-L(x)$ allows so to define clustering in any metric space equipped
with a measure inducing conditional probability measures on spheres. It is 
a local quantity which is constant in the radius $r$ of the spheres if we are in 
Euclidean spaces or fractal spaces with some uniformity.
The second figure shows a graph for which the local clustering coefficients $C(x)$ take the 
values $1/2$ or $2/5$, where the global characteristic length is $\mu=2.9$, 
the mean clustering coefficient is $\nu=3/7$ and
the length-cluster coefficient $\lambda=-\mu/\log(\nu)$ is $3.4$. }
\label{clustering}
}
\end{figure}

The starting point of this note is the observation that the local cluster property $C(x)$ if a vertex
in a graph can be written in terms of relative characteristic length of the unit sphere $S(x)$ within the 
unit ball $B(x)$ . We are not aware that this has been noted already, but it is remarkable as it allows 
to carry over the definition of ``local cluster property" to metric spaces equipped with a probability measure 
as long as the measure has the property that it induces probability measures $m(\cdot,S)$ on spheres by 
conditional probability $m(A) = \lim_{\epsilon \to 0} m(N_{\epsilon}(A \cap S))/m(N_{\epsilon}(S))$, where
$N_{\epsilon}(Y)$ is an $\epsilon$ neighborhood of $Y$. For such metric spaces, one can also define a
cousin of scalar curvature $\log(2^{\iota(x)-1} \delta/\delta_2)$ by comparing the measures $\delta,\delta_2$ 
of spheres of radius $1$ and $2$. We call it 
``scalar curvature" because if $\iota(x)$ is the inductive dimension of space at that point, then 
$\log(2^{\iota(x)-1} \delta/\delta_2)$ is zero if the volume $\delta_2(x)$ of the 
sphere of radius $2$ is equal to $2^{{\rm dim}(S(x))}= 2^{\iota(x)-1}$ times the volume $\delta$ 
of a sphere of radius $1$. Comparing with Riemannian manifolds, the comparison of spheres $S_{2r}(x)$
and $S_{r}(x)$ allows to measure scalar curvature.  The dimension scaling factor $\log(2)$ does not matter 
much because the Hilbert action $\eta(G)$, the expectation of the scalar 
curvature is $\log(2) \iota(G)$ plus the expectation of $s(x)=\log(\delta/\delta_2)$. Since the later is a local
function and appears also in a formula of Newman, Watts and Strogatz and because we look at the 
dimension $\iota(G)$ anyway, it does not matter if for simplicity $s(x)$ is called the
{\bf scalar curvature} and its average over vertex set is called the {\bf Hilbert action} $\eta(G)$
of the graph.  \\

Having these functionals, one can now study the relation between
local cluster property, characteristic length, dimension and curvature on rather general metric spaces
equipped with a natural measure. But first of all these functionals could be used to select ``natural
geometries".  Functionals are important in physics because most fundamental laws are of variational nature. 
In a geometric setup and especially in graph theory, one can consider the Euler characteristic, 
the characteristic length, the Hilbert action $\eta(G)$ given as an average scalar curvature or 
the spectral complexity $\xi(G)$, given as the product of the nonzero eigenvalues of the Laplacian $L$ 
of $G$. An other related quantity is the number of rooted spanning forests in $G$ which is 
$\theta(G) = \det(1+L)$ by the Chebotarev-Shamis theorem
\cite{ChebotarevShamis2,ChebotarevShamis1,Knillforest,CauchyBinetKnill}.
Many other extremal problems are studied in graph theory \cite{BollobasExtremal}. \\

For finite simple graphs, the Euler characteristic $\chi(G) = v_0-v_1+v_2-...$ 
with $v_k$ is the number of $k$ dimensional simplices $K_{k+1}$ of $G$. In geometric
situations like four dimensional geometric graphs, for which the spheres are discrete three dimensional 
spheres, this number can be seen as a quantized Hilbert action \cite{eveneuler} because it is an 
average of the Euler characteristic over all two-dimensional subgraphs and then via Gauss-Bonnet 
an average over a set of sectional curvatures and all points. 
The characteristic length is the average length of a path between two points relating 
to the other variational problem in general relativity. 
The Hilbert action itself is an average of scalar curvature, which can be 
defined for a large class of metric spaces. Spectral complexity
is natural because of the matrix tree theorem of Kirkhoff (see i.e. \cite{Biggs} )
relates it with the number of trees in a graph.  \\

While the Euler characteristic is an average of curvature by Gauss-Bonnet both in the Riemannian 
and graph case \cite{cherngaussbonnet} and Hilbert action is an average of scalar curvature, 
both the characteristic length $\mu$ as well as the complexity $\xi$ can not be written 
as an average of local properties: look at two disjoint graphs connected along a 
one-dimensional line graph. Cutting that line in the middle can change both quantities in a very different way 
depending on the two components which are obtained. If $\mu$
or $\xi$ were local, the functional would change by a definite value, independent of the length of the ``rope".
The characteristic length has been noted to be relevant for molecules: the chemist Harry Wiener found
correlations between the {\bf Wiener index} $W(G)$, the sum over all distances which is 
$W(G) = n(n-1) \mu(G)$ and the melting point of a hydrocarbon $G$. As reported in 
\cite{GoddardOellermann}, the characteristic length has first been considered in 
\cite{DoyleGraver1} in graph theory. For spectral properties, see \cite{Miegham}. 
The observation that the Wiener index satisfies $W(G) \leq W(T)$ 
for any spanning tree $T$ of $G$ was made in \cite{Schmuck}.  \\

The characteristic length $\mu(G)$ has been studied quite a bit. There are few classes of
graphs, where one can compute the number explicitly: for complete graphs, we have 
$\mu(K_n)=1$ for complete bipartite graphs $\mu(K_{a,b} = (2a^2+2b^2+ab)/((a+b)(a+b-1))$
for line graphs $\mu(L_n) = (n+1)/3$, for cyclic graphs $\mu(C_n) = (n+1)/4$.
for odd $n$ and $n^2/(4 (n-1))$ if $n$ is even \cite{EntringerJacksonSnyder}.
No general relation between length and diameter exists besides the trivial $\mu(G) \leq {\rm diam}(G)$. 
We have $1 \leq \mu(G) \leq (n+1)/3$ \cite{DoyleGraver1},
On the class $G(n,m)$ of graphs with $n$ vertices and $m$ edges
one has $\mu(G) \leq 2 - (2m)/(n(n-1))$ \cite{EntringerJacksonSnyder}.
Among all graphs with $n$ vertices the maximum $(n+1)/3$ is obtained for line graphs, the minimum $1$ 
for complete graphs \cite{DoyleGraver1}. The problem to find the maximum among all graphs of given 
order and diameter is unknown. There are relations with the spectrum: $\mu(G) \leq b-(2(b-1)m/(n(n-1))$ on $G(n,m)$
where $b$ is the number of distinct Laplacian eigenvalues. There are also upper bounds in terms of the second
eigenvalue.  
For a connected graph $\mu(G) \leq \beta(G)$ where $\beta(G)$ is the independence number \cite{Chung1988},
the maximal number of pairwise nonadjacent vertices in $G$. 
On all graphs of order $n$ and minimal degree $\delta$, then $\mu(G) \leq n/(\delta + 1) + 2$ \cite{KouiderWinkler}.
The conjecture generating computer program Graffiti \cite{Fajtlowicz}
suggested for constant degree $\delta$ graphs to have $\mu(G) \leq n/\delta$.
There is a spectral relation$\mu \geq \tr(L^+) 2/(n-1)$, where
$L^+$ is the Moore pseudo inverse of the Laplacian $L$ and $n$ the number of vertices( \cite{Sivasubramanian}
which is the McKay equality for trees and otherwise always a strict inequality.  \\

The Euler characteristic is definitely one of most important functionals in geometry if not the 
most important one. It is a homotopy invariant and can by Poincar\'e-Hopf be expressed cohomologically 
as $\sum_{i=0} (-1)^i b_i$ using $b_i$ the dimensions of cohomology groups $H^i(G)$. 
A general theme in topology is to extend the notion of Euler characteristic to larger classes of 
topological spaces. This essentially boils down to the construction of cohomology. 
The limitations are clear already in simple cases like the Cantor set which  have
infinite Euler characteristic as half of the space is homeomorphic to itself. Euler characteristic 
can be defined for a metric space if there exists a subbasis of contractible graphs. The 
Euler characteristic can then be defined as the Euler characteristic of the nerve graph. This
illustrates already how important homotopy is in general when studying Euler characteristic.
It is also historically remarkable that the first works done by Euler on Euler characteristic 
were of homotopy nature by deforming the graph.  \\

Various other functionals have been considered on graphs. The {\bf average centrality} $f(G)$
is the mean of the {\bf local closeness centrality}
$$ f(x) = \sum_{y} \frac{1}{\sum_{y \neq x} d(x,y)} $$
of a vertex $x$. An other number is the {\bf geodetic number} $g(G)$ which is the minimum
cardinality of a geodetic set in $G$, where a set is called {\bf geodetic} if its geodesic closure
is $G$ \cite{BKT}. The {\bf scale measure} of a graph is defined as $\sigma(G) = s(G)/m$, where 
$s(G) = \sum_{e \in E} d(e)$ and $d((a,b)) = d(a) d(b)$ and $m={\rm max}_{e \in E} d(e)$.  
An other important notion is the {\bf chromatic number} $c(G)$ which can be seen as the smallest
$p$ for which a scalar function with values in $Z_p$ exists for which the gradient 
field $df$ nowhere vanishes. Related to graph coloring, we have in \cite{colorcurvature} defined 
{\bf chromatic richness} $C(c)/c!$ measuring the size of the set of coloring functions modulo permutations.
The {\bf arboricity} $a(G)$ of $G$ is the minimal number of spanning forests 
which are needed to cover all edges of $G$. One knows that $\theta(G) = \det(L+1)$ by 
Chebotarev-Shamis \cite{ChebotarevShamis2,ChebotarevShamis1,Knillforest,CauchyBinetKnill}.
While the number of spanning forests is a measure of complexity, the arboricity is a measure of 
``denseness" of the graph. The {\bf Nash-Williams formula} \cite{NashWilliams,CMWZZ} 
tells that the arboricity is the maximum of $[m_H/(n_H-1)]$, where $n_H$
is the number of vertices and $m_H$ the number of edges of a subgraph $H$ of $G$ and where $[r]$ is
the ceiling function giving the minimum of all integers larger or equal than $r$. For example, 
for $K_{4,4}$ where $m=16, n=8$ the arboricity must be at least $16/7=2.28$ and so at least $3$ 
and one can give examples of three forests covering all edges. 
For a complete graph, the arboricity is $[n/2]$. The arboricity gives a bound on 
the chromatic number $c(G) \leq 2 a(G)$ (a fact noted in \cite{ButlerArboricity} and follows 
from the fact that each forest can be colored by 2 colors). 
The {\bf Laplacian ratio} $p(G) = {\rm per}(L)/\prod_i d_i$, 
where ${\rm per}(L)$ is the 
permanent of the Laplacian and $d_i$ are the vertex degrees has been introduced in \cite{BrualdiGoldwasser}.
The {\bf symmetry grade} of a graph is the order of the automorphism group of $G$. For the complete graph 
for example, it is $n!$ while for a cyclic graph it is $2n$, the size of the dihedral group. 
The {\bf domatic number} $d(G)$ of a graph finally is the maximal size of a dominating partition of 
the vertex set.  \\

\begin{center}
\begin{tabular}{|l|l|l|l|l|} \hline
Functional               &        &  Based  on          &   Local  &  Spectral         \\ \hline 
Euler characteristic     & $\chi$ &  Curvature          &   yes    &  somehow \cite{knillmckeansinger}  \\ 
Inductive dimension      & $\iota$&  Point dimension    &   yes    &  not known        \\
Characteristic length    & $\mu$  &  Distance           &   no     &  on trees \cite{GoddardOellermann}  \\ 
Complexity               & $\xi$  &  Eigenvalues        &   no     &  yes              \\ 
Forest complexity        & $\theta$ &  Eigenvalues      &   no     &  yes              \\
Hilbert action           & $\eta$ &  Scalar curvature   &   yes    &  not known        \\
Mean cluster             & $\nu$  &  Local cluster      &   yes    &  yes              \\ 
Average degree           &$\delta$&  Vertex degree      &   yes    &  yes              \\ 
Graph density            &$\epsilon$& Edge number       &   yes    &  yes              \\
Scale measure            &$\sigma$&  Vertex degree      &   yes    &  not known        \\
Cluster-length-ratio     &$\lambda$& Distances          &   no     &  not known        \\
Independence number      & $\beta$ & Adjacency          &   no     &  not known        \\
Variance                 & $v$    &  Distance           &   no     &  not known        \\
Centrality               & $f$    &  Local centrality   &   yes    &  not known        \\
Chromatic number         & $c$    &  Gradient fields    &   no     &  no \cite{Cvetkovic} \\
Arboricity               & $a$    &  Forests            &   no     &  not known        \\
Geodetic number          & $g$    &  Geodesics          &   no     &  not known        \\ 
Domatic number           & $d$    &  Partitions         &   no     &  not known        \\
Symmetry grade           & $t$    &  Symmetry group     &   no     &  not known        \\
Laplacian ratio          & $p$    &  Permanent          &   no     &  not known        \\ \hline
\end{tabular}
\end{center}

Besides the question whether a functional is local,
it would also be interesting to know more about which properties are spectral properties. 
Euler characteristic can be seen as a spectral property {\bf in the wider sense}:  
it is the super trace of $e^{-t D^2}$ for the Dirac operator $D$ for every $t$ by 
McKean-Singer \cite{knillmckeansinger}. 
The average degree $\delta$ can be written in terms of the adjacency matrix $A$ as 
$\delta = 2 {\rm tr}(A^2)/{\rm tr}(A^0)$ and
with the Laplacian $L$ as ${\rm tr}(L)/{\rm tr}(L^0)$. The {\bf graph density}
$\epsilon=\delta/(n-1) = 2 v_1/(v_0(v_0-1))$ is also spectral, because both $\delta$ and $n=v_0$ are spectral.

\begin{figure}
\parbox{15.4cm}{
\parbox{15cm}{\scalebox{0.42}{\includegraphics{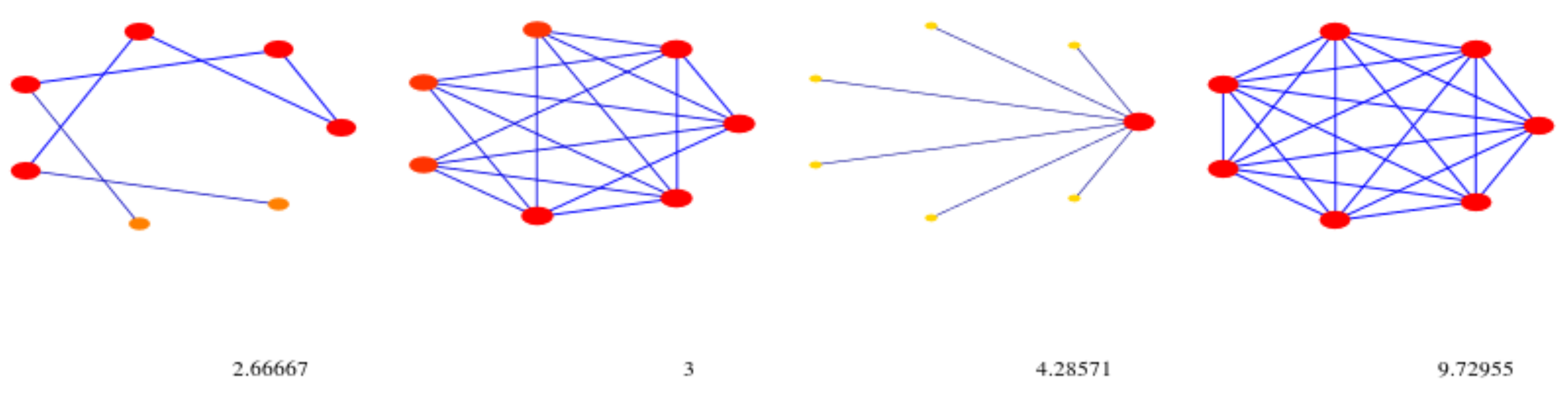}}}
\parbox{15cm}{\scalebox{0.42}{\includegraphics{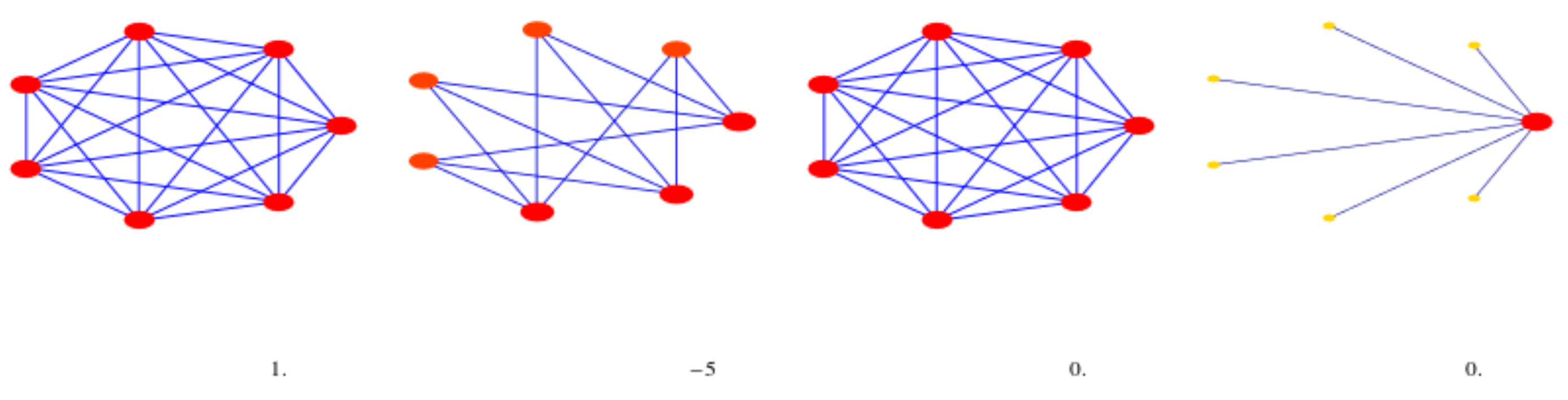}}}
\parbox{15cm}{\scalebox{0.42}{\includegraphics{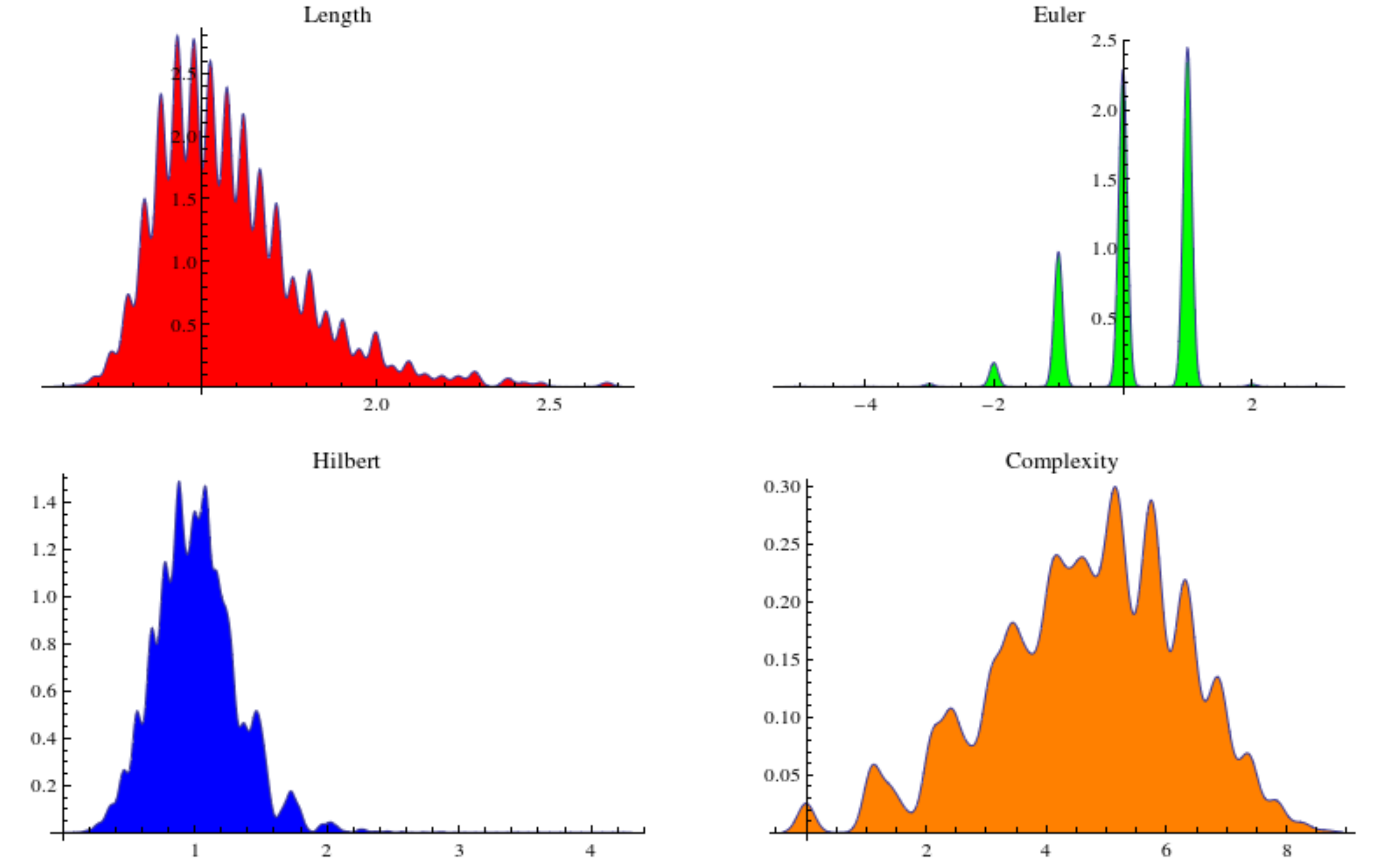}}}
}
\caption{
Graphs with minimal and maximal characteristic length, 
Euler characteristic, Hilbert action and the logarithm of the complexity among graphs all
connected graphs with $7$ vertices. We see then the distribution of the four functionals
on this finite probability space with $1'866'256$ elements. The figure illustrates that
Euler characteristic is the most interesting functional. All others appear to be extremal for
line graphs, complete graphs or star graphs, all three of which do not carry 
interesting geometries. 
}
\label{extrema}
\end{figure}

\section{Characteristic length}

A finite simple graph $G=(V,E)$ defines a finite metric space, where
$d(x,y)$ is the geodesic distance between two vertices $x,y$, the length of the shortest
path connecting the two points. The {\bf characteristic length} $L$ is the expectation
of the distances between different vertices
$\mu(G) =  \frac{1}{n^2-n} \sum_{x \neq y} d(x,y)$,
where $n=v_0=|V|$ is the number of vertices. We have the understanding that the graph 
with only one vertex has zero length $\mu(G)=0$ and that the number $\mu$ is averaged over every connected
component of a graph independently. Unlike the Euler characteristic which 
is a homotopy invariant, the global characteristic length is a metric property as
it  depends on the concrete metric and is only invariant under graph isomorphisms and not under topological 
homeomorphisms \cite{KnillTopology} nor homotopies. Since determining
$\mu$ requires to map out all the distances between any two points, it is natural to ask whether
one can estimate the global length by averaging local properties. Such a relation
has appeared in ``formula 54" of \cite{NewmanStrogatzWatts},
$$  \mu \sim 1+\frac{\log(\delta/n)}{\log(\delta/\delta_2)}  \; , $$ 
where it was derived from generating functions. `Formula 54"
uses the average degree $\delta$ (the average of the degrees $\delta(x)$)
and the average $2$-nearest neighbors $\delta_2$ (the average of the size $\delta_2(x)$ 
of the spheres of radius $2$). Because $\delta=2|E|/n$ by Euler's handshaking lemma, we know that $\delta/n$
agrees with the {\bf edge density} $\epsilon(G)=|E|/\B{n}{2}$. Since $\log(2^{dim(x)-1} \delta/\delta_2)$
measures a relation between volumes of spheres of distance $r$ and $2r$, it can intuitively be
thought of a scalar curvature, the flat case meaning the volume
$\delta_2$ of the sphere of radius $2$ being larger than $2^{\iota(x)-1}$ times the volume  $\delta$ of the sphere
of radius $1$ which has dimension $\iota(x)-1$, where $\iota(x)$ denotes the dimension of the vertex $x$.
The global characteristic length is according to "formula 54"
an "edge density/curvature" relation which is intuitive because distances are small 
in spaces of positive curvature and the fact that if a material has large edge density, it 
allows fast travel. To summarize, "formula 54" relates the edge density $\epsilon$, 
the Hilbert action $\eta$ and the characteristic length $\mu$ as
$$  \mu \sim 1+\frac{\log(\epsilon)}{\log(\eta)}  \; , $$ 
where 
$$ \eta = \frac{1}{n} \sum_{x \in V} \log(\frac{\delta(x)}{\delta_2(x)})  \;  $$
is the average of shifted scalar curvature $s(x) = \log(\delta_2/\delta)$. \\ 

Empirically, we also see a close relation between $\mu$ and $\log(1/\nu)$, where $\nu$ is the mean 
cluster coefficient as defined by \cite{WattsStrogatz}. We have mentioned this in 
\cite{KnillMiniatures} (see also \cite{KnillAffine,KnillOrbital3}). Clustering is a notion 
which is related to transitivity in sociology \cite{WassermanFaust}. We see in many natural networks
that these two quantities $\mu$ and $\log(\nu)$ grow linearly with $\exp(n)$ with a similar 
order of magnitude. Since $C(x)$ can be expressed integral geometrically
using lengths, it can be pushed to natural metric spaces like compact Riemannian 
manifolds or metric spaces with fractal dimension. \\

Why is the characteristic length interesting? In physics, $d(a,b)$ is minimized by geodesics
connecting $a$ with $b$ so that $\mu$ averages over all possible Lagrangian actions of 
paths between any two points. General relativity builds on two variational pillars: one is 
the Hilbert action $\eta$, the average scalar curvature, the other is the 
geodesic variational problem to minimize geodesic length between two points. 
The first tells how matter determines geometry, the second 
describes how geometry influences matter. ``Formula 54" indicates a statistical correlation between
Hilbert action, edge density and characteristic length. Since scalar curvature $S$ appears in the expansion
$|\exp_x(B_r(x)|/|B_r(x)| = 1-S r^2/(6({\rm dim}+2)) + \dots$, we can express this without 
referring to Euclidean space $R^k$ as 
$$  \frac{|\exp_x(B_{r}|}{|\exp_x(B_{2r})|} = 1+r^2 \frac{S}{2(k+2)} +  \cdots $$ 
so that
$$ S \sim  \frac{2(k+2)}{r^2} ( 1+\frac{|\exp_x(B_{2r}|}{|\exp_x(B_{r})|}) 
     \sim  \frac{2(k+2)}{r^2} \log( \frac{|\exp_x(B_{2r})|}{|\exp_x(B_r)|} ) \; . $$
Since $\delta$ and $\delta_2$ are averages, the mean field approximations are 
only expected to be good in some mean field sense.  \\

We would like to know more about the relation between Euler characteristic $\chi$, 
graph density $\epsilon$, dimension $\iota$, complexity $\xi$, clustering $\nu$ 
and characteristic length $\mu$. Also interesting are relations with corresponding notions 
of Riemannian manifolds. 
Besides average length and dimension, complexity makes sense for Riemannian manifolds too, even so it
needs zeta regularized determinants for its definition. 
The mean clustering $\nu$ is proportional to the volume with a proportionality factor which depends 
on the dimension.  For a Riemannian manifold, scaling the space by a factor $n$ scales the 
characteristic length in the same way and the volume by a factor $n^d$. For random networks, 
the global characteristic length
typically grows like $\log(n)$ in dependence of volume $n$ - one aspect of the  ``small world" phenomenon. 
This does not violate geometric intuition at all because dimension grows too with more nodes. 
We have explicit formulas \cite{randomgraph} for the average dimension of a random 
Erd\"os-Renyi graph of size $n$, where edges are turned on with probability $p$. 
For other random graphs like Watts-Strogatz networks or networks generated by random permutations, 
we see similar growth rates. Other type of networks can show slightly different growth rates.
Barabasi-Albert networks are examples, where the growth rate is slower. \\

Related to characteristic length is the {\bf magnitude} $|G|= \sum_{i,j} Z^{-1}_{ij}$, where
$Z_{ij} = \exp(-d(i,j))$ defined by Solow and Polasky. We numerically see that at least for
small vertex cardinality $n$ and connected graphs,
the complete graph has minimal magnitude and the star graph maximal magnitude, a feature which
is shared for many functionals (see Figure~\ref{extrema}). 
Also the magnitude can be defined for more general metric spaces so that one can look for a general metric 
space at the supremum of all $|G|$ where $G$ is a finite subset with induced metric. 
The {\bf convex magnitude conjecture} of Leinster-Willington claims that for convex 
subsets of the plane, $|A|=\chi(A) + p(A)/4 - a(A)/(2\pi)$, where $p$ is the perimeter 
and $a$ is the area.

\section{Local cluster coefficient}

Given a subgraph $H$ of a graph $G$, define the {\bf relative characteristic length} as
$$ \mu(H,G) = \frac{1}{|H| (|H|-1)} \sum_{x,y \in H, x \neq y}  d_G(x,y)   \; . $$
The difference $\nu_H(G) = \mu(H)-\mu(H,G)$ is nonnegative. It is zero if all geodesics 
connecting two points in $H$ remain in $H$. The notion makes sense in any metric 
space equipped with a probability measure. The number $\nu_H(G)$ is a measure for how 
far $H$ is away from being convex within the metric space $G$. The notion depends on a choice
of a probability measure on $G$. On graphs, many fractal sets, spaces on which a Lie group
acts transitively or Riemannian manifolds, there is a natural measure. \\ 

{\bf Examples: } \\
{\bf 1)} For a compact surface $H$ in three dimensional space $G=R^3$ for example,
$\nu_H(G)$ is zero if and only if $H$ is a plane. We understand that the average has been
formed with respect to some absolutely continuous probability measure on the surface $H$ 
which gives positive measure to every open set. \\
{\bf 2)} For a curve in a compact Riemannian 
manifold $G$, the relative characteristic length is $0$ if the curve is a short
enough geodesic. But the relation $\mu(H,G)/\mu(H)$ will decrease eventually to zero for
aperiodic geodesic paths $H$ as the geodesic will accumulate. \\
{\bf 3)} For a region $G$ in Euclidean space, we have $\mu(H,G)=\mu(H)$ if and 
only if $H$ is {\bf convex} in $G$ in the sense that for any two points $x,y \in H$
there is a geodesic in $G$ which is also in $H$. 

Define the {\bf local characteristic length}  as
$$   \mu(x) = \mu(S(x),B(x)) \; , $$
where $S(x)$ is the unit sphere and $B(x)$ is the unit ball of the vertex $x$. We always 
assume that the space is large enough so that the unit ball at every point is convex within $G$ 
in the above sense. 
The quantity $L(x)$ is defined therefore for large enough 
metric spaces equipped with a probability measure $m$ which is nice enough that
it induces measures on spheres $S(x)$ by limiting conditional expectation. Examples
are Riemannian manifolds or graphs. \\

For a finite simple graph $G=(V,E)$, the {\bf local cluster coefficient} is defined as 
$$   C(x) = 2 |E(x)|/(|V(x)|+1)|V(x)|  \; , $$
where $E(x)=V_1(x)$ is the set of edges in the unit sphere $S(x)$ of $x$ and $V(x)=V_0(x)$ 
is the set of vertices in $S(x)$. The {\bf mean cluster coefficient} $\nu(G)$ was defined 
by Watts-Strogatz is the average of $S(x)$ over all $x \in V$. The local cluster
coefficient gives the edge occupation rate in the sphere $S(x)$ of a vertex $x$. 
Other related quantities are the {\bf global cluster coefficient} 
defined as the frequency of oriented triangles $3 v_2/t_2$ within 
all oriented connected vertex triples. The {\bf transitivity ratio} is the ratio $v_2/s_2$
of non-oriented triangles within the class of non-oriented connected vertex triples in $G$. 
We do not look at the later two notions because they are close to the mean cluster
density and because intuition about them is more difficult.  \\

We can define {\bf higher global cluster coefficients} $C_k(G)$ of a graph 
as the fraction $v_k/w_k$, where $v_k$ is the number of $k$-dimensional
simplices $K_{k+1}$ in $G$ and $w_k$ the number of connected $k$-tuples of 
vertices. Of course, $C_1(G)$ is the global clustering coefficient
and $C_1(S(x))$ is the local clustering coefficient at a point $x$. 
One could define higher local clustering coefficient $C_k(x)$ as the 
global clustering coefficient $C_k(S(x))$. There are also higher dimensional 
characteristic lengths: define $d_k(x,y)$ as the distance between two 
$k$-dimensional simplices $x,y$, where the distance is the smallest
$l$ such that we can connect $x,y$ with a sequence $x_0=x,x_1,x_2,\dots,x_l$ 
of overlapping $k$ simplices $x_k$. Of course $d_1(x,y)$ is the usual
geodesic distance and for geometric graphs of dimension $d$ all distances
$d_k$ are essentially the same. For a triangularization of a Riemannian 
manifold, where every sphere is one-dimensional cyclic graph, the distance
between two triangles which overlap is $1$.
We mentioned these higher clustering coefficients and higher characteristic lengths
because we believe they could be used to get a closer relation with 
inductive dimension which also uses the entire spectrum of higher dimensional 
simplices and not only zero dimensional vertices and one dimensional edges. 

\section{Cluster coefficient and local length}

While the following observation is almost obvious, we are not aware
that it has been noticed anywhere already. The result allows to write the 
local cluster coefficient in terms of the relative characteristic length $L(x)$ 
of the unit sphere $S(x)$ within the unit ball $B(x)$. This will allow
us to push the notion of local clustering coefficient to other metric spaces 
equipped with natural measures. Just define then $C(x) =2-L(x)$ there. 

\begin{lemma}[Cluster-Length-Lemma]
$L(x) = 2-C(x)$. 
\end{lemma}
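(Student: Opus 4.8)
The plan is to compute the relative characteristic length $L(x)=\mu(S(x),B(x))$ directly from its definition and match it against the formula for $C(x)$. Write $S(x)$ for the unit sphere, so $|V(x)|=n(x)$ is its number of vertices and $|E(x)|=e(x)$ its number of edges, and recall that $B(x)=S(x)\cup\{x\}$ is the unit ball. The first observation is that every pair of distinct vertices $y,z\in S(x)$ has $d_G(y,z)\in\{1,2\}$ when distances are measured inside $B(x)$: if $y$ and $z$ are adjacent then the distance is $1$, and otherwise the path $y\to x\to z$ through the center realizes distance $2$ (it cannot be larger since $B(x)$ has radius $1$, and we may invoke the standing convexity assumption on unit balls so that $d_{B(x)}=d_G$ on $S(x)$). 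This dichotomy is the geometric heart of the argument.

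Next I would count. There are $\binom{n(x)}{2}=\tfrac{1}{2}n(x)(n(x)-1)$ unordered pairs of distinct vertices in $S(x)$, of which $e(x)$ are edges (distance $1$) and the remaining $\tfrac{1}{2}n(x)(n(x)-1)-e(x)$ are non-edges (distance $2$). Summing distances over ordered pairs (which is what $\mu(H,G)$ uses, with the $\tfrac{1}{|H|(|H|-1)}$ normalization) doubles both the pair count and the distance sum, so
$$
L(x)=\mu(S(x),B(x))=\frac{e(x)\cdot 1+\bigl(\tfrac{1}{2}n(x)(n(x)-1)-e(x)\bigr)\cdot 2}{\tfrac{1}{2}n(x)(n(x)-1)}
     =2-\frac{2e(x)}{n(x)(n(x)-1)}.
$$
Then I would reconcile this with the definition $C(x)=2|E(x)|/\bigl((|V(x)|+1)|V(x)|\bigr)$ as printed. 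Taken literally this has $(n(x)+1)n(x)$ in the denominator rather than $n(x)(n(x)-1)$; I suspect a typo and that the intended denominator is $n(x)(n(x)-1)$, consistent with the Watts–Strogatz definition and with the formula $\nu(G)=\tfrac1n\sum_x \tfrac{2e(x)}{n(x)(n(x)-1)}$ given in the introduction. With that reading, $C(x)=\dfrac{2e(x)}{n(x)(n(x)-1)}$, and comparing with the display above yields $L(x)=2-C(x)$, as claimed.

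The only real subtlety — and the step I'd flag as the main obstacle — is the degenerate case $n(x)\le 1$, where both the sum over pairs in $S(x)$ and the denominator $n(x)(n(x)-1)$ vanish, so $L(x)$ and $C(x)$ are $0/0$. I would dispose of this by the convention already adopted in the paper that a one-vertex (or empty) graph has characteristic length $0$, and by noting that $C(x)$ is conventionally set to $0$ (or $1$, but $0$ is the choice compatible with $L(x)=2-C(x)$ only if one also adjusts $L$) in that case; a clean way to state the lemma is therefore to assume $n(x)\ge 2$, i.e. that $x$ has at least two neighbors, which is the only regime in which either quantity carries information. A secondary point worth a sentence is justifying $d_{B(x)}(y,z)=d_G(y,z)$ for $y,z\in S(x)$: this is exactly the standing hypothesis that unit balls are convex in $G$, invoked just before the definition of the local characteristic length, so no extra work is needed.
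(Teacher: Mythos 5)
Your proof is correct and follows essentially the same route as the paper's: observe that distances between sphere vertices within $B(x)$ take only the values $1$ (adjacent) and $2$ (via the center), count the two kinds of pairs, and average. You are also right that the displayed denominator $(|V(x)|+1)|V(x)|$ in the paper's definition of $C(x)$ is a typo for $|V(x)|(|V(x)|-1)$ — the paper's own proof uses the latter, counting $C(x)d(d-1)$ ordered adjacent pairs — and your handling of the degenerate case $n(x)\le 1$ is a reasonable extra precaution the paper omits.
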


\begin{proof}
By definition, the distance function takes only two different positive values in the ball $B(x)$. 
The first possibility is ${\rm dist}(x,y)=1$ which is the case if one of the vertices is the center.
The second possibility is $d(x,y)=2$ if both $x,y$ are on the sphere. 
If $d$ is the degree of the vertex $x$ then $B(x)$ has $(d+1)$ vertices and $S(x)$ has
$d$ vertices. The distance $1$ appears $C(x) d(d-1)$ times on the sphere $S(x)$. 
The distance $2$ appears $(1-C(x)) d (d-1)$ times in the sphere $S(x)$. The average is 
$$  (1 \cdot[ C(x) d (d-1) ] + 2 \cdot [ (1-C(x)) d (d-1) ] ) /(d (d-1)) = 2- C(x)  \; .  $$
\end{proof} 


An extremal case is a star graph $S_n$ which has cluster coefficient $0$ and dimension $0$
and where the distance between any two points of the unit sphere is $2$. 
An other extreme case is the complete graph $K_{n+1}$ which has the cluster coefficient $0$
and dimension $n$, where the distance between any two points of the unit sphere is $1$.  \\

The wheel graph $W_n(x)$ is an example in which each point has a $1$-dimensional sphere,
the local cluster coefficient of the center is $2/(n-1)$ and of the other points $2/3$. 
The mean cluster coefficient for $W_n$ is $m(W_n) = (n (2/3) + 2/(n-1) )/(n+1)$. 
The cluster-length-ratio $\lambda=-\mu/\log(\nu)$ is close to $1$. 

\section{Dimension} 

An other important local quantity is the dimension ${\rm dim}(x)$ of a vertex
and the inductive dimension of a graph, the average dimension of its vertices. 
It was defined in \cite{elemente11} as
$$ {\rm dim}(\emptyset)= -1, {\rm dim}(G) = 1+\frac{1}{|V|} \sum_{v \in V} {\rm dim}(S(v))  \; ,  $$
where $S(v)=\{ w \in V \; | \; (w,v) \in E \; \}, \{ e=(a,b) \in E \; | 
 \; (v,a) \in E, (v,b) \in E \; \}$ 
denotes the unit sphere of a vertex $v \in V$. \\

Already on a local level, there can be relations. If the dimension of a point is zero, 
then clearly the local length $L(x)$ is $2$ and the cluster coefficient is zero. And
also $\lambda$ is zero.  \\
We have shown in \cite{randomgraph} 
that the expectation ${\rm E}_p[{\rm dim}]$ on $G(n,p)$ satisfies the recursion 
$$ d_{n+1}(p) = 1+\sum_{k=0}^n \B{n}{k} p^k (1-p)^{n-k} d_k(p) \; , $$
where $d_0=-1$. Each $d_n$ is a polynomial in $p$ of degree $\B{n}{2}$.

\section{Metric spaces}

The characteristic length $\mu$ of a metric space $(X,d)$ with measure $m$
is the expectation of length $d(x,y)$ on $X^2 \setminus D$, where $D$ is the 
diagonal $\{ (x,x) \}$ in $X \times X$. The local length $L(x)$
is the characteristic length of the unit sphere $S(x)$ within the unit
ball $B(x)$. Motivated by the above, we call $C(x)=2-L(x)$ the local cluster coefficient
of the pint in $(X,d)$ and its expectation $m$, the mean cluster coefficient. 
Lets look at the quantity $-\mu r/(\log(\nu))$, where $r$ is the radius of the small ball and
where the volume of the manifold is $1$. These are integral geometric questions. We 
in general assume that $r$ is scaled in such a way that the radius of injectivity is 
larger than $1$ and the unit ball is contractible and convex. For a Riemannian manifold, the
number $C$ is a dimension-dependent constant, the average distance between two points in the
unit sphere. For manifolds, studying $\lambda$ is therefore equivalent than to study the 
characteristic length. The question is however interesting for fractals. One can ask for
example, what the cluster-length ration $\lambda$ is for the Sirpinsky carpet. \\

{\bf Example 1}. For a flat torus $X = R^2/(rZ)^2$ with flat Riemannian geodesic 
distance and area measure, we have 
$$ \nu = \frac{2}{\pi} \int_0^{\pi/2} (2-\frac{2 \tan(\phi)}{\sqrt{1+\tan^2(\phi)}}) \; d\phi 
     = (2-\frac{4}{\pi}) =  0.72676... $$
and $\mu =r \frac{1}{6} \left(\sqrt{2}+\sinh^{-1}(1)\right) 
         =r(\sqrt{2}+{\rm arcsinh}(1))/6 = r \cdot 0.382598...$. Therefore,
$$ \lambda = \mu/(r \log(1/\nu)) = 0.382598/\log(0.72676) = 1.17837 ... \;.  $$

{\bf Example 2}. Also for a three-dimensional flat torus, the clustering 
coefficient $C(x)$ is constant and given as the average distance of two points 
on a sphere, which is $4/3$. The characteristic length on the other hand is $0.480296$,
a numerical integral which we were not able to evaluate analytically. The
quotient is $\lambda = 1.18454$. \\

{\bf Example 3}. For a two-dimensional sphere of surface area $1$,
the characteristic length $\mu$ is $1/2\sqrt{\pi}$ times the characteristic
length $L(S(x))$ of the unit sphere which is intrinsic and
uses the geodesic length within the surface and not from an embedding. 
We measure it in $R^3$ to be $1.57032...$ so that $\mu=0.442979..$ can be computed 
by using that that the geodesic distance between
two points given in spherical coordinates as $(\phi_1,\theta_1),
(\phi_2,\theta_2)$ is given by the Haverside formula
$H(\phi_1-\phi_2) + \sin(\phi_1) \sin(\phi_2) H(\theta_1-\theta_2)$
where $H(x) = \sin^2(x/2)$ is the Haverside function. Random 
points on the sphere can be computed with the uniform distribution in $\theta$
and the $\arccos$ distribution in $\phi$. Assuming the same $C(x)$ value as in the plane 
(which uses that near a specific point we can replace the sphere with its
tangent space) and get $\lambda = 1.36 \dots $. 

\section{Cluster-length ratio}

Dimension plays a role for characteristic length. 
We measure experimentally that the global length-cluster ratio quantity 
$$ \lambda = -\mu/\log(\nu) \;     $$
is correlated to dimension for Erd\"os-Renyi graphs: 

\begin{figure}
\scalebox{0.21}{\includegraphics{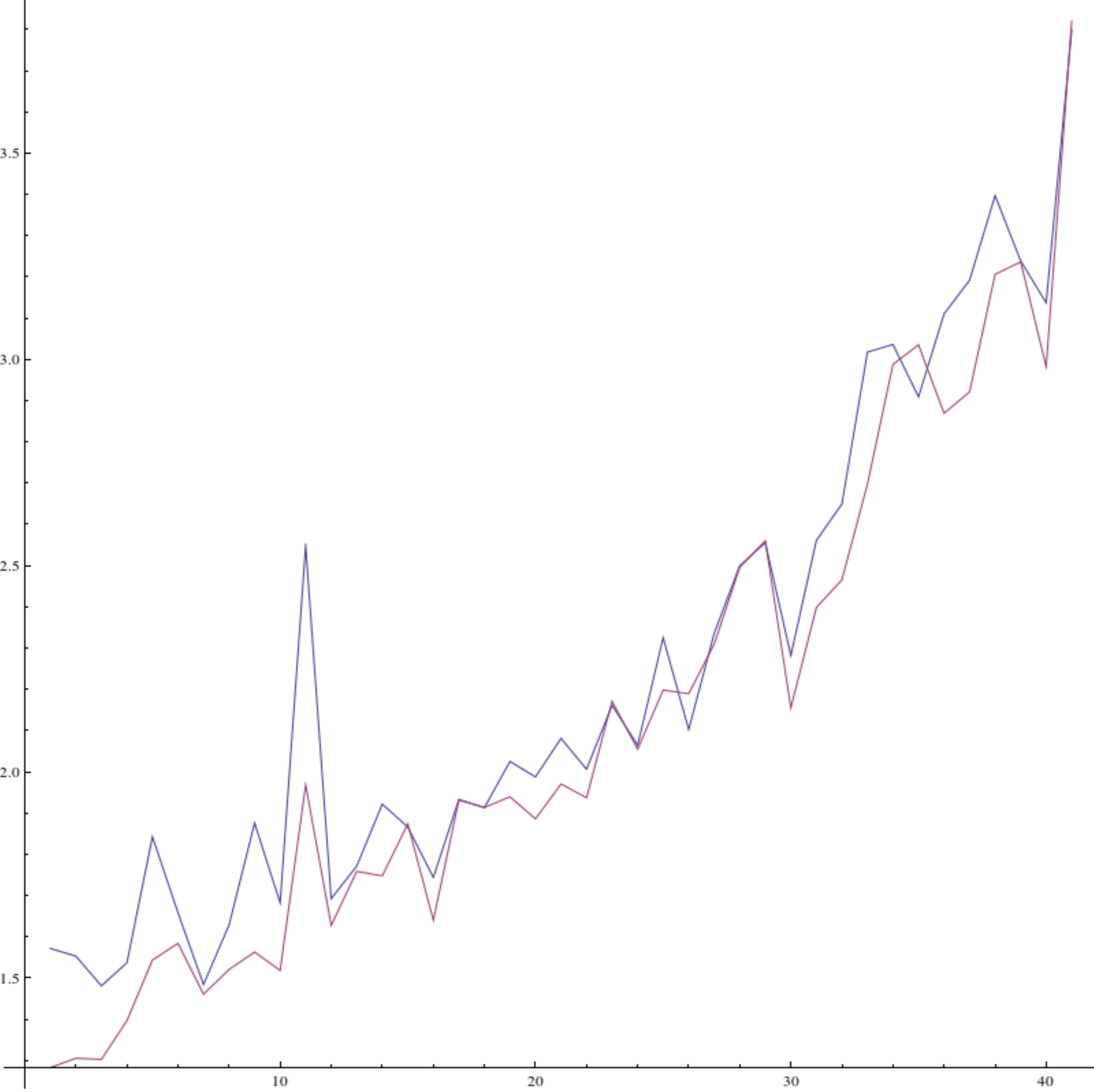}}
\scalebox{0.21}{\includegraphics{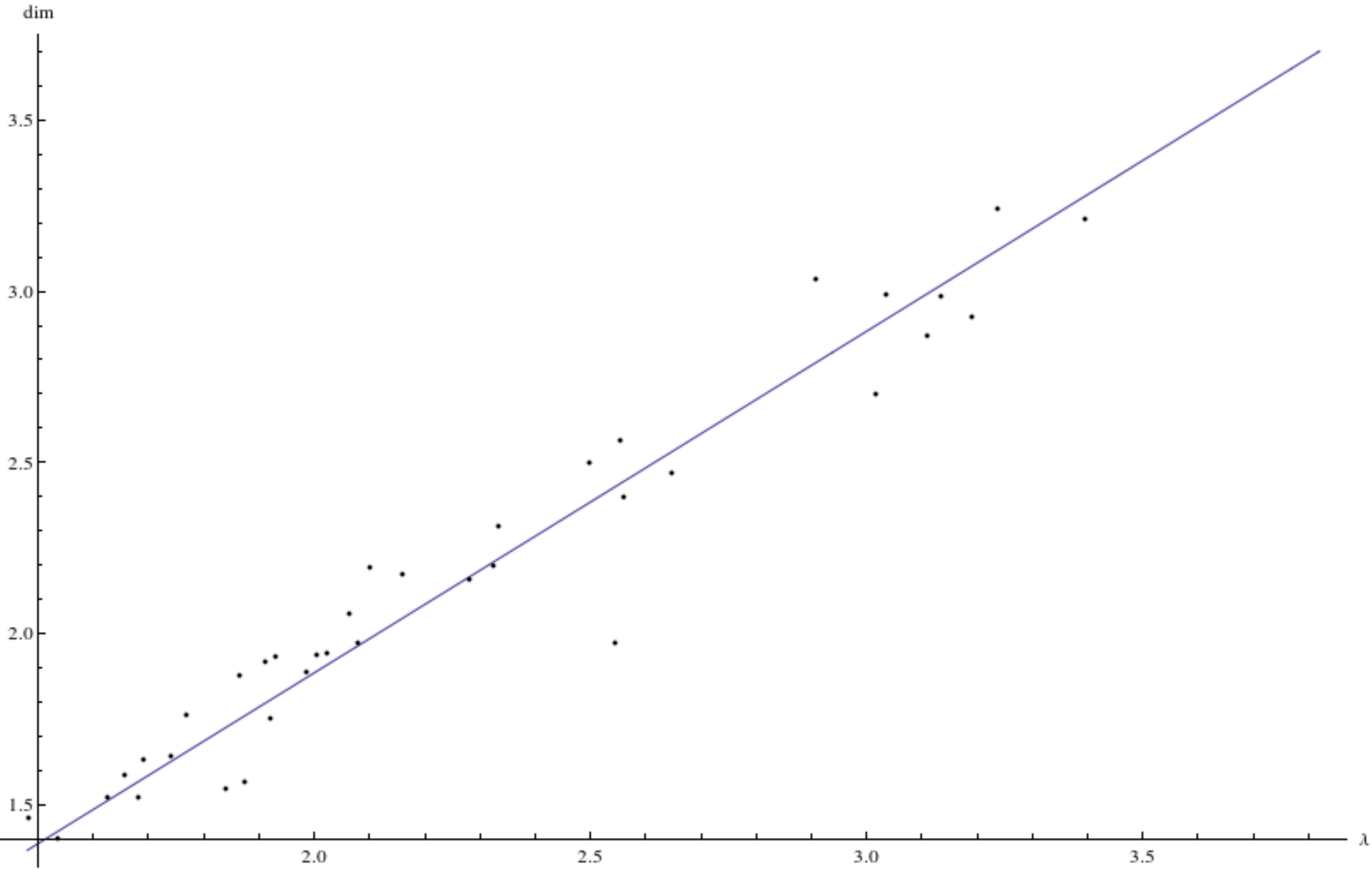}}
\caption{
The plot of the $\lambda(p)$ and $\iota(p)$ in the Erdoes-Renyi case
for probabilities $p$ between $0.3$ and $0.7$, where $n=15$. The graphs 
indicate a clear correlation between cluster-length ratio and dimension. 
The second plot shows the two quantities together with a regression line. }
\end{figure}

\begin{figure}
\scalebox{0.12}{\includegraphics{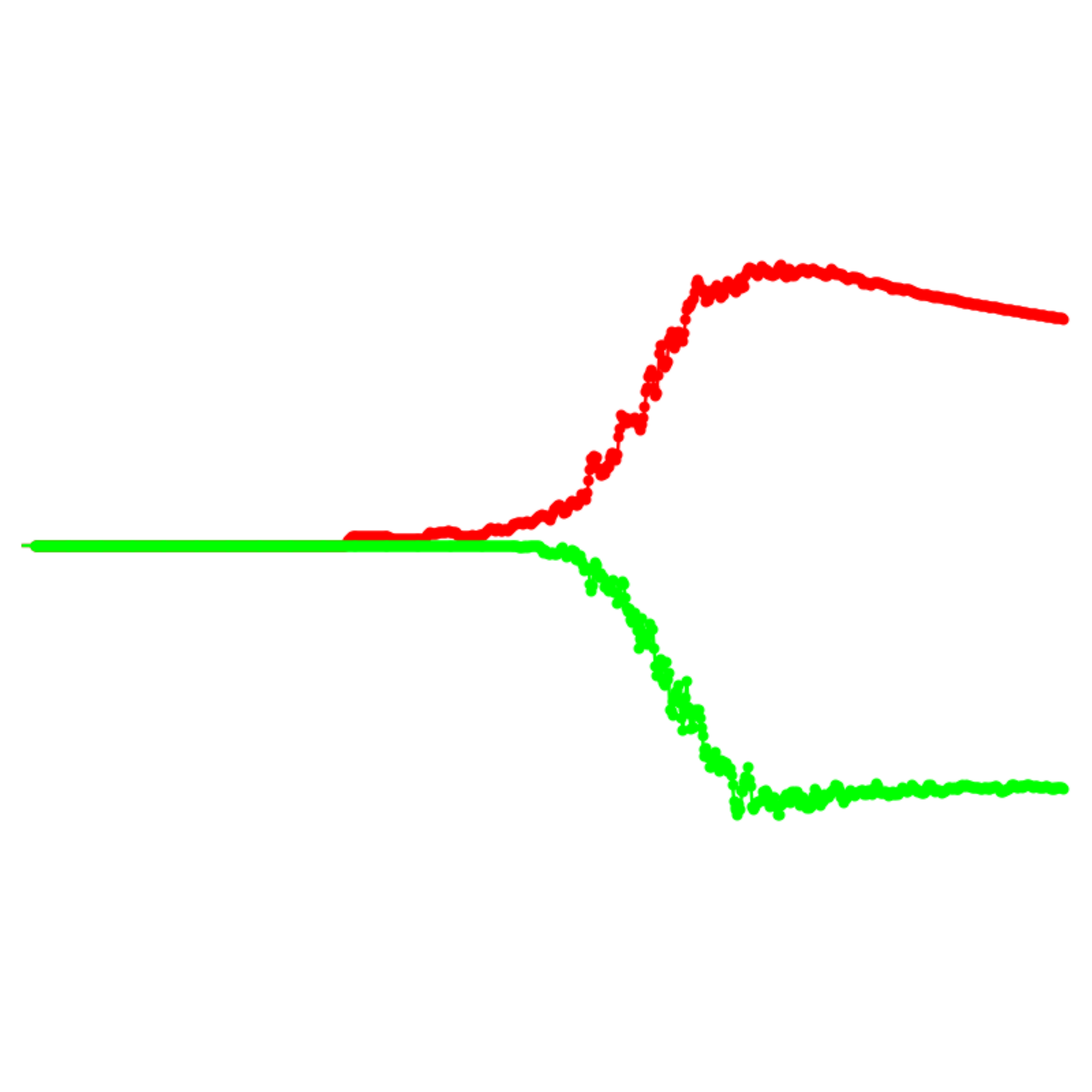}}
\scalebox{0.12}{\includegraphics{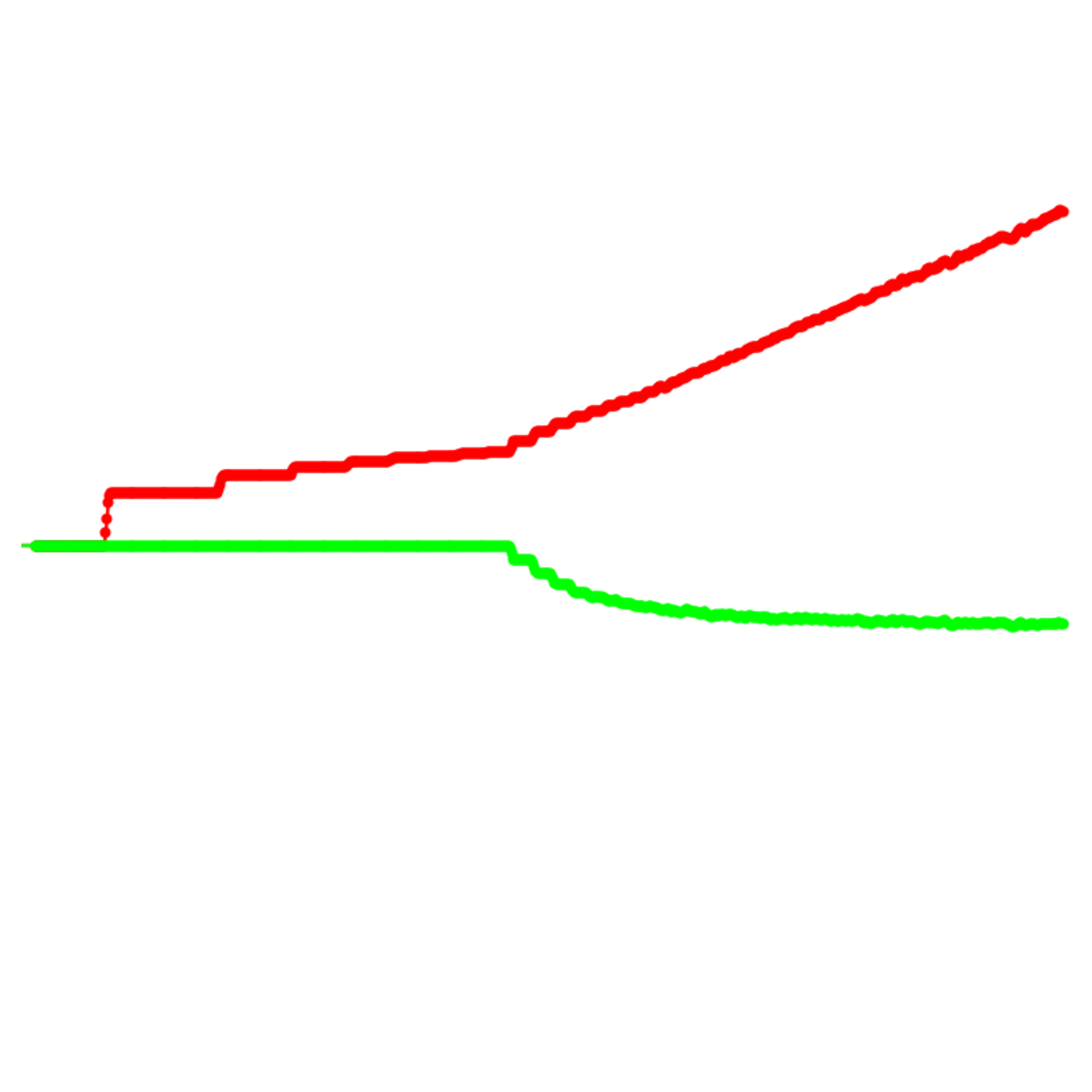}}
\scalebox{0.12}{\includegraphics{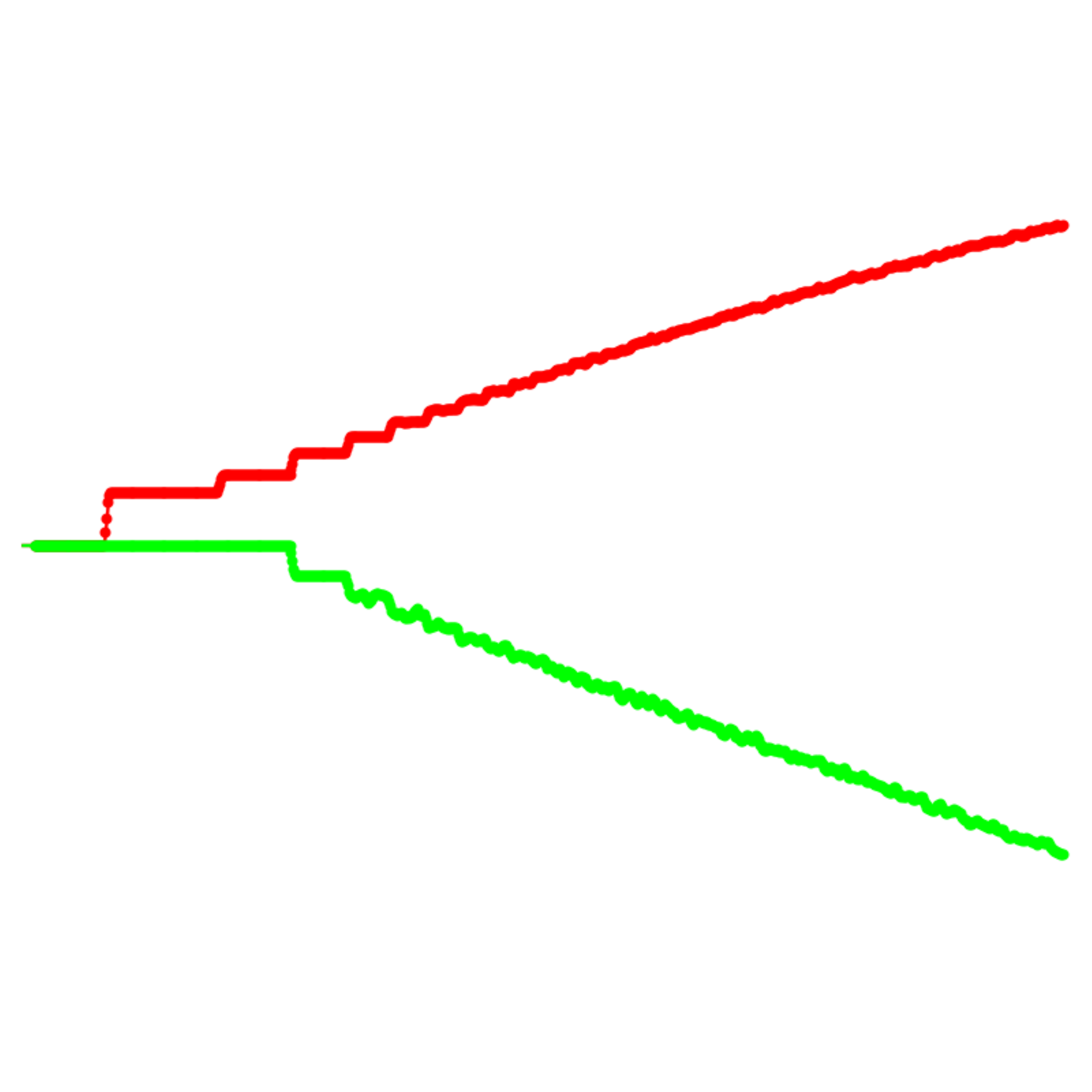}}
\scalebox{0.12}{\includegraphics{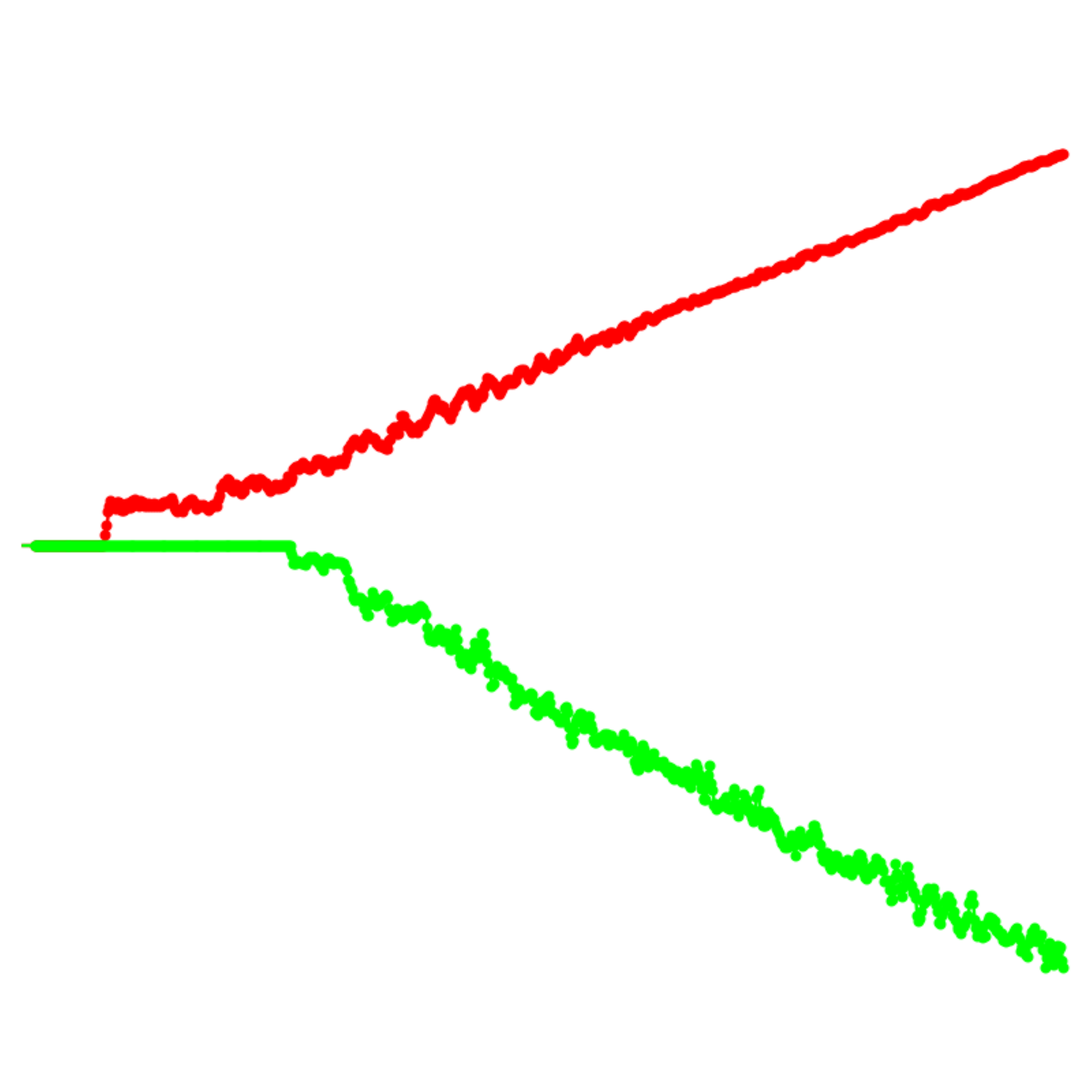}}
\scalebox{0.12}{\includegraphics{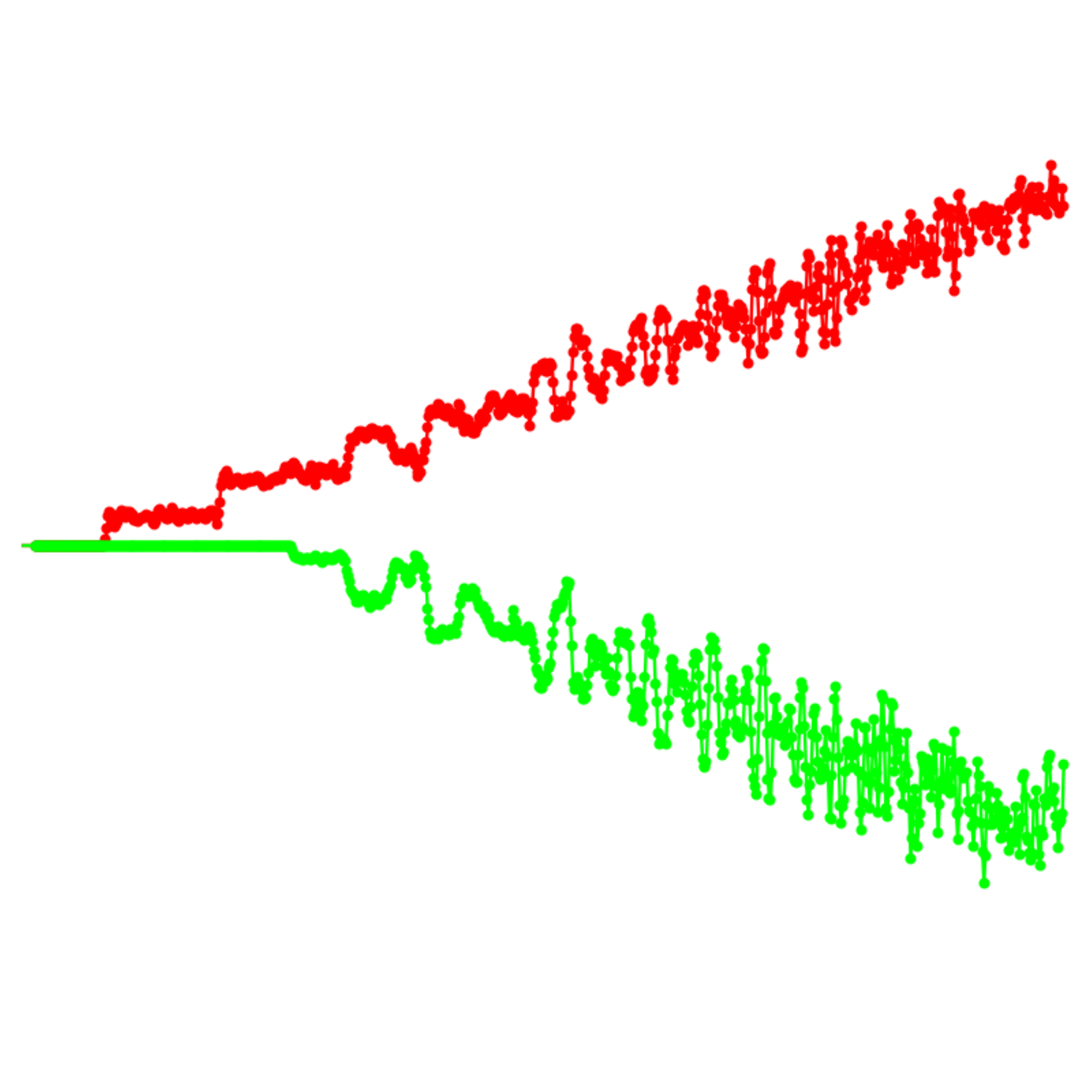}}
\scalebox{0.12}{\includegraphics{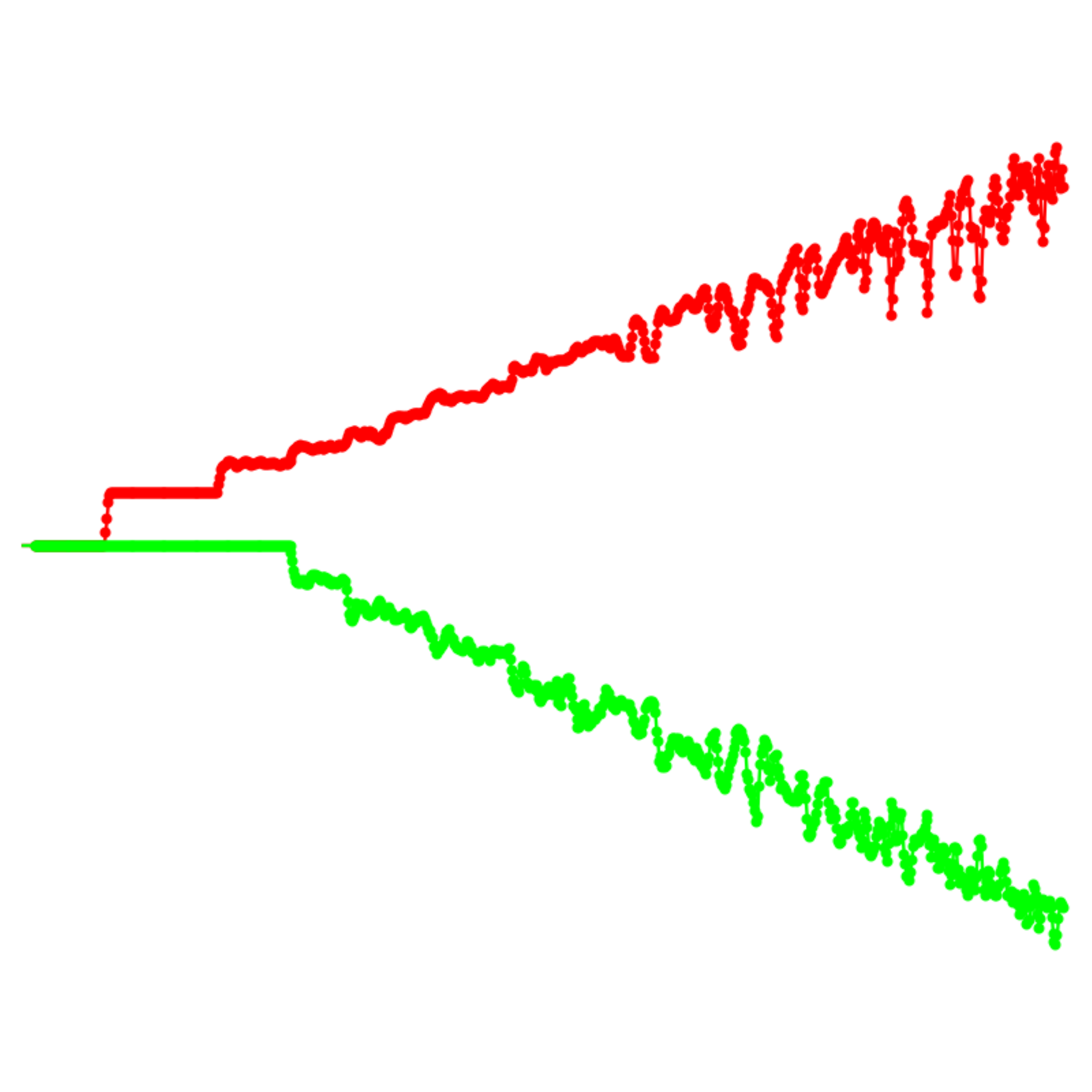}}
\caption{
The characteristic length $\mu$ (red), the log of the cluster coefficient $\log(\nu)$ (green)
and the average vertex degree $d$ (blue) shown together as a function of the number of vertices
in the network. $n$. We plot with a logarithmic
scale in $n$ so that logarithmic growth is seen linearly.
We first use the Erdoes-Renyi probability space (where each edge is turned on
with probability $p=0.1$), then for Watts-Strogatz for $k=4, p=0.1$ and then for 
Barabasi-Albert \cite{Barabasialbert}
networks. In the second row, we see first the case of two quadratic maps \cite{KnillOrbital3}
then two two random permutations and finally a case with correlated
generators, where the clustering is extremely small. 
}
\end{figure}

Here is some intuition, why the limit should exist: the cluster coefficient is related to the 
existence of triangles in the graph. For orbital networks \cite{KnillOrbital1,KnillOrbital2,KnillOrbital3}
defined by polynomial maps the number of triangles is bounded by a constant $C$. This implies that 
$\nu(G) \leq C/n$ and $\log(\nu(G)) \leq \log(C)-\log(n)$ for orbital networks, we should get that
$\log(\nu(G))/\log(n)$ has a finite interval as accumulation points. 
To show that $\mu(G)$ grows like $\log(n)$, we don't want 
too many relations $T^w=T^v$ with different words $w,v$. We call this a collision. 
If  $d$ is the number of generators, then, if there were no collisions, the relation
$d^\mu=n$ holds. With $C_2$ double collisions, assuming no triple collisions, we have
the relation $d^\mu-C=n$ and so $\mu = \log(n+C)/\log(d)$. Together with $-\log(\nu) = \log(n)-\log(C)$
we have $\gamma = \log(n+C)/(\log(d) (\log(n)-\log(C_2(n))))$. 
So, if we can show $C_2$ to be of the order $\log(n)$ and the number of triangles to be 
of the order $o(n)$, and triple collisions are rare, then we should be able to prove that the 
limit exists. 

\section{Classes of networks}

The space $E(n,p)$ of all graphs on a vertex set with $n$ nodes, where every node is turned on 
with probability $p$ is a probability space. The limit 
\begin{eqnarray*} 
 \label{lengthcluster}
  \lambda = -\mu/\log(\nu) 
\end{eqnarray*} 
for $n \to \infty$ exists almost surely. We see that the value is
close to $r/{\rm dim}$, where $r$ is the radius of the graph and ${\rm dim}$ is
the dimension of the network. \\

The quotient~(\ref{lengthcluster}) is interesting because 
$\mu$ is a global property and $\nu$ is the average of a local property. 
Intuitively, such a relation is to be expected because a larger $C(x)$ allows to tunnel 
faster through a ball $B(x)$ and allows for shorter paths. If the limit exists, 
then $\mu= \lambda \log(\nu)$.  
Knowing $\lambda$ is important because the characteristic length is more costly to compute 
while the clustering coefficient $C$ is easier to determine as a simple average 
of local quantities. 
To allow an analogy from differential geometry, we could compare $C(x)$ with curvature, because a 
metric space with larger curvature has a smaller 
average distance between two points on the unit sphere.  \\

We could look at graphs with a given dimension and volume and minimize the average path
length between two points among all graphs. It is a long shot but one can ask
whether there is the relation between graphs minimizing $\mu$ and 
graphs minimizing Euler characteristic $\chi$. We can only explore this so far for very 
small graphs. The reason for asking this is that Euler characteristic can also been seen 
as an average of scalar curvature and therefore a quantized Hilbert action 
\cite{eveneuler}.

\section{Related questions}

Variational problems on graphs usually need some constraints because  the functionals are often
trivial without restrictions. We can restrict the 
number of vertices or edges and look at the maximum or minimum on that space. 
More generally, we can use a Lagrange type problem and look at all the graphs for which one
functional is constant and extremize the other on that class. This leads to more
questions and most of them seem not have been studied. Instead of restricting to a ``level surface"
we can also look at the functionals on an equivalence class of graphs. One interesting 
example is to look at homotopy as an equivalence relation. A homotopy step $G \to G'$ is given by
choosing a contractible subgraph $H$ of $G$  and connect each vertex of $H$ with a new vertex $v$.
An other homotopy step is the reverse operation: remove a vertex for which the unit sphere is 
contractible. The notion of contractible if a sequence of homotopy steps transforms it to a one
point graph.  \\

Lets look at the example of minimizing the dimension $\iota(G)$ in a homotopy class. The 
homotopy class of a circle contains  graphs of arbitrary 
large dimension; it contains for example
discretization of a solid torus (dimension 3) or an annulus (dimension 2). \\

We can also find one-dimensional graphs homotopic to the circle which are not $C_n$.
We can for example attach one dimensional hairs to the circle without changing dimension,
nor homotopy. We have now a new functional $\iota'(G)$ which is the minimal dimension
among all graphs $H$ homotopic to $G$. For a contractible graph, the minimal dimension
is $0$. On the class of graphs homotopic to the circle the minimal dimension 
is $1$ and for all graphs homotopic to an  icosahedron it is $2$. 
A similar modified dimension $\iota'$ can be defined
in the continuum: define the homotopy dimension of a space $M$ as the minimum of
the Hausdorff dimensions of all compact metric spaces $(X,d)$ homotopic to $M$. \\

The question is whether the minimum is always attained by a geometric graph or
smooth manifolds. 

\bibliographystyle{plain}

\begin{thebibliography}{10}

\bibitem{BKT}
M.~Kovse B.~Bresar and A.~Tepeh.
\newblock Geodetic sets in graphs.
\newblock In M.~Dehmer, editor, {\em Structural Analysis of Networks}.
  Birkh{\"a}user, 2011.

\bibitem{Barabasialbert}
A-L. Barab{\'a}si and R.~Albert.
\newblock Emergence of scaling in random networks.
\newblock {\em Science}, 286(5439):509--512, 1999.

\bibitem{Biggs}
N.~Biggs.
\newblock {\em Algebraic Graph Theory}.
\newblock Cambridge University Press, 1974.

\bibitem{BollobasExtremal}
B.~Bollobas.
\newblock {\em Extremal Graph Theory}.
\newblock Dover Courier Publications, 1978.

\bibitem{BrualdiGoldwasser}
R.A. Brualdi and J.L. Goldwasser.
\newblock Permanent of the laplacian matrix of trees and bipartite graphs.
\newblock {\em Discrete Mathematics}, 48:1--2, 1984.

\bibitem{ButlerArboricity}
S.~Butler.
\newblock Relating the arboricity with the chromatic number of a graph.
\newblock http://www.math.iastate.edu/butler/PDF/arboricity.pdf, accessed, July
  20, 2014.

\bibitem{CMWZZ}
B.~Chen, M.~Matsumoto, J.~Wang, Z.~Zhang, and J.~Zhang.
\newblock A short proof of nash-williams' theorem for the arboricity of a
  graph.
\newblock {\em Graphs and Combinatorics}, 10:27--28, 1994.

\bibitem{Chung1988}
F.R.G. Chung.
\newblock The average distance and independence number.
\newblock {\em J. Graph Theory}, 12:229--235, 1988.

\bibitem{Cvetkovic}
D.M. Cvetkovic.
\newblock Chromatic number and the spectrum of a graph.
\newblock {\em Publications de l'insitute Mathematique}, 14(28):25--38, 1972.

\bibitem{DoyleGraver1}
J.~Doyle and J.~Graver.
\newblock Mean distance in a graph.
\newblock {\em Discrete Math}, 17:147--154, 1977.

\bibitem{EntringerJacksonSnyder}
R.C. Entringer, E.E. Jackson, and D.A. Snyder.
\newblock Distance in graphs.
\newblock {\em Czechoslovak Mathematical Journal}, 26:283--296, 1976.

\bibitem{Fajtlowicz}
S.~Fajtlowicz.
\newblock Toward fully automated fragments of graph theory.
\newblock {\em Graph Theory Notes N. Y.}, 42:18--25, 2002.

\bibitem{GoddardOellermann}
W.~Goddard and O.R. Oellermann.
\newblock Distance in graphs.
\newblock In M.~Dehmer, editor, {\em Structural Analysis of Networks}.
  Birkh{\"a}user, 2011.

\bibitem{KnillMiniatures}
O.~Knill.
\newblock Natural orbital networks.
\newblock http://arxiv.org/abs/1311.6554.

\bibitem{randomgraph}
O.~Knill.
\newblock The dimension and {Euler} characteristic of random graphs.
\newblock {\\}http://arxiv.org/abs/1112.5749, 2011.

\bibitem{cherngaussbonnet}
O.~Knill.
\newblock A graph theoretical {Gauss-Bonnet-Chern} theorem.
\newblock {\\}http://arxiv.org/abs/1111.5395, 2011.

\bibitem{elemente11}
O.~Knill.
\newblock A discrete {Gauss-Bonnet} type theorem.
\newblock {\em Elemente der Mathematik}, 67:1--17, 2012.

\bibitem{knillmckeansinger}
O.~Knill.
\newblock {The McKean-Singer Formula in Graph Theory}.
\newblock {\\}http://arxiv.org/abs/1301.1408, 2012.

\bibitem{CauchyBinetKnill}
O.~Knill.
\newblock A cauchy-binet theorem for {P}seudo determinants.
\newblock {\\}http://arxiv.org/abs/1306.0062, 2013.

\bibitem{Knillforest}
O.~Knill.
\newblock Counting rooted forests in a network.
\newblock http://arxiv.org/abs/1307.3810, 2013.

\bibitem{KnillOrbital1}
O.~Knill.
\newblock Dynamically generated networks.
\newblock http://arxiv.org/abs/1311.4261, 2013.

\bibitem{eveneuler}
O.~Knill.
\newblock The {E}uler characteristic of an even-dimensional graph.
\newblock http://arxiv.org/abs/1307.3809, 2013.

\bibitem{KnillOrbital2}
O.~Knill.
\newblock Natural orbital networks.
\newblock http://arxiv.org/abs/1311.6554, 2013.

\bibitem{KnillOrbital3}
O.~Knill.
\newblock On quadratic orbital networks.
\newblock http://arxiv.org/abs/1312.0298, 2013.

\bibitem{colorcurvature}
O.~Knill.
\newblock Curvature from graph colorings.
\newblock http://arxiv.org/abs/1410.1217, 2014.

\bibitem{KnillAffine}
O.~Knill.
\newblock Dynamically generated networks.
\newblock 2014.
\newblock http://arxiv.org/abs/1311.4261.

\bibitem{KnillTopology}
O.~Knill.
\newblock A notion of graph homeomorphism.
\newblock http://arxiv.org/abs/1401.2819, 2014.

\bibitem{KouiderWinkler}
M.~Kouider and P.Winkler.
\newblock Mean distance and minimum degree.
\newblock {\em J. Graph Theory}, 25(1):95--99, 1997.

\bibitem{Miegham}
P.~Van Miegham.
\newblock Graph spectra for complex networks.
\newblock 2011.

\bibitem{NashWilliams}
C.~Nash-Williams.
\newblock Edge-disjoint spanning trees of finite graphs.
\newblock {\em J. London Math. Soc.}, 36:455--450, 1961.

\bibitem{NewmanStrogatzWatts}
M.E.J. Newman, S.H. Strogatz, and D.J. Watts.
\newblock Random graphs with arbitrary degree distributions and their
  applications.
\newblock {\em Physical Review E}, 64, 2001.

\bibitem{ChebotarevShamis1}
P.Chebotarev and E.~Shamis.
\newblock Matrix forest theorems.
\newblock arXiv:0602575, 2006.

\bibitem{ChebotarevShamis2}
E.V.~Shamis P.Yu, Chebotarev.
\newblock A matrix forest theorem and the measurement of relations in small
  social groups.
\newblock {\em Avtomat. i Telemekh.}, 9:125--137, 1997.

\bibitem{Schmuck}
N.S. Schmuck.
\newblock The wiener index of a graph.
\newblock Diploma thesis, TU Graz, 2010.

\bibitem{Sivasubramanian}
S.~Sivasubramanian.
\newblock Average distance in graphs and eigenvalues.
\newblock {\em Discrete Mathematics}, 309:3458--3462, 2009.

\bibitem{WassermanFaust}
S.~Wasserman and K.~Faust.
\newblock {\em Social Network analysis: Methods and applications}.
\newblock Cambridge University Press, 1994.

\bibitem{SmallWorld}
D.~J. Watts.
\newblock {\em Small Worlds}.
\newblock Princeton University Press, 1999.

\bibitem{WattsStrogatz}
D.~J. Watts and S.~H. Strogatz.
\newblock Collective dynamics of 'small-world' networks.
\newblock {\em Nature}, 393:440--442, 1998.

\end{thebibliography}

\end{document}